\definecolor{darkblue}{rgb}{0.0,0.0,0.3}
\newtheorem{defin}{Definition}[section]
\newtheorem{rem}{Remark}[section]
\newtheorem{theorem}{Theorem}[section]
\newtheorem{corollary}{Corollary}[section]
\newenvironment{customthm}[1]
{\innercustomthm}
{\endinnercustomthm}
\newcommand{\tr}{\mathrm{Tr}}
\newcommand{\bra}[1]{\ensuremath{\left\langle #1\right|}}
\newcommand{\ket}[1]{\ensuremath{\left|#1\right\rangle}}
\begin{document}
\title{Entanglement properties of
positive operators with ranges in completely entangled
subspaces}
%Entanglement properties of positive operators on completely entangled subspaces}
\author{R Sengupta}
%\address{Department of Mathematical Sciences, Indian Institute
%of Science Education \& Research, Mohali; Sector 81,
%Knowledge City, Mohali 140 306,
%India}
\address{Theoretical Statistics and Mathematics Unit, Indian  Statistical Institute, Delhi Centre, 7 S J
S Sansanwal Marg, New Delhi 110 016, India}
\email[R
Sengupta]{\href{mailto:rb@isid.ac.in}{rb@isid.ac.in},
\href{mailto:ritabrata.sengupta@gmail.com}{ritabrata.sengupta@gmail.com}}
\author{Arvind}
\address{Department of Physical Sciences, Indian Institute
of Science Education \& Research, Mohali; Sector 81,
Knowledge City, Mohali 140 306,
India}
\email[Arvind]{\href{mailto:arvind@iisermohali.ac.in}{arvind@iisermohali.ac.in}}
\author{Ajit Iqbal Singh}
\address{Theoretical Statistics and Mathematics Unit, Indian  Statistical Institute, Delhi Centre, 7 S J
S Sansanwal Marg, New Delhi 110 016, India}
\email[Ajit Iqbal
Singh]{\href{mailto:aisingh@isid.ac.in}{aisingh@isid.ac.in},
\href{mailto:ajitis@gmail.com}{ajitis@gmail.com}}
%%%%%%%%%%%%%%%%%%%%%%%%%%%%%%%%%%%%%%%%
\begin{abstract}
We prove that the projection on a completely entangled
subspace $\mathcal{S}$ of maximum dimension in a
multipartite quantum system obtained by
Parthasarathy\cite{krp1} is not positive under partial
transpose. We next show that several positive operators with
range in $\mathcal{S}$ also have the same property. In this
process we construct an orthonormal basis of $\mathcal{S}$ and
provide a linking theorem to link the constructions of
completely entangled subspaces due to Parthasarthy, Bhat and
Johnston.
\end{abstract}
%%%%%%%%%%%%%%%%%%%%%%%%%%%%%%%%%%%%%%%%%%%%
\thanks{The authors thank K. R. Parthasarathy for insightful
discussion, critical comments and useful suggestions  for
the paper. They also thank Nathaniel Johnston for his
careful reading of the manuscript, critical comments,
corrections and suggestions which improved the paper. Ajit
Iqbal Singh thanks Indian National Science Academy (INSA)
for INSA Honorary Scientist position and Indian Statistical
Institute for invitation to visit the institute under the
scheme together with excellent research facilities and
stimulating atmosphere.}
\maketitle
%%%%%%%%%%%%%%%%%%%%%%%%%%%%%%%%%%%%%%%%%%%%%%%%%%%%%%%%%%%%%%%%
\section{Introduction}
Entanglement is one of the key distinguishing features of
quantum mechanics which separates the quantum description of
the world from its classical counterpart.  Ever since its
discovery by Schr\"odinger~\cite{PSP:1737068, PSP:2027212}
and its use by  Einstein, Podolsky and Rosen~\cite{epr}, the
study of entanglement has played a central role in the area
of quantum theory and a huge volume of literature is
available  in this context.  In recent years, with the
emergence of quantum information where quantum entanglement
gets intimately connected to the  computational advantage of
quantum computers and to the security of quantum
cryptographic protocols, its study has become even more
important.  A detailed discussion on these topics is
available in the standard textbook of Nielsen and
Chuang~\cite{NC}, and a lucid introduction by
Parthasarathy~\cite{krp6} as well as a rigorous information
theoretic account by Wilde~\cite{Wilde} are also very useful
resources.

Entangled quantum states are those for which it is not
possible to imagine the physical reality of a composite
quantum system as two separate entities, even when there is
no active interaction between the two subsystems. In general
linear combinations of entangled states need not be
entangled, however, there have been constructions of
subspaces where every state in the subspace is entangled.
The first such construction was through the unextendable
product basis(UPB) by Bennett et. al. \cite{B2}, and
further extended by DiVincenzo et. al. \cite{T2}.  More recently,
Parthasarathy~\cite{krp1}, Bhat~\cite{bhat} and 
Johnston~\cite{PhysRevA.87.064302} have, by their own
different methods,  constructed
completely entangled subspaces $\mathcal{S}$ of maximum 
possible dimension in the state space of multipartite quantum
systems of finite dimensions. In such a subspace every state
in the subspace is entangled.

In our work we focus on projection operators on such
completely entangled subspaces.  We give a linking theorem
which links the constructions of Parthasarathy, Bhat and
Johnston.  Parthasarathy \cite{krp1} gave an orthonormal
basis for $\mathcal{S}$ for the bipartite case of equal
dimensions. We develop a method for construction of an
orthonormal basis for the space $\mathcal{S}$ in the general
case. Further, we construct the (orthogonal) projection on the space
$\mathcal{S}$ and show that it is not positive under partial
transpose at any level $j$.  The proof utilizes the
orthonormal basis for $\mathcal{S}$ that we develop.
Finally, we show that a large class of positive operators with range in
$\mathcal{S}$ are not positive under partial transpose at
 level $j$. This extends a substantial part of Johnston's result for the
bipartite case to the multipartite case by an altogether
different method.

The material in this paper is organized as follows: We begin
Section~\ref{spaces} with the basics of quantum
entanglement. We then describe the constructions of
completely entangled subspaces by Parthasarthy, Bhat and
Johnston. Next we give a theorem linking these three
constructions. Then we give a construction procedure of an
orthonormal basis for theses spaces. 
In Section~\ref{projections} we discuss our main results
regarding the entanglement properties of projection operators
on completely entangled subspace as also of certain positive
operators with support in this space. Section~\ref{conc}
offers some concluding remarks.

\section{Completely entangled spaces}
\label{spaces}
We begin with some well known
concepts and results. 

\subsection{Entanglement}
\begin{defin}
A finite dimensional quantum system is described by a finite
dimensional complex Hilbert space $\mathcal{H}$.   A
Hermitian, positive semidefinite operator
$\rho\in\mathcal{L(H)}$, the algebra of linear
operators on $\mathcal{H}$ to itself, with unit trace 
 is said to be a \emph{state} of the system
$\mathcal{H}$. Rank 1 states are called  pure states. A pure
state can be written as an outer product $\rho=|\psi\rangle\langle \psi|$ where
$|\psi\rangle\in\mathcal{H}$ and $\langle\psi|\psi\rangle=1$. 
\end{defin}
\begin{defin}
A state
$\rho$ acting on a bipartite system $\mathcal{H}_1\otimes\mathcal{H}_2$ is
said to be \emph{separable} if it can be written as 
\begin{equation}
\rho= \sum_{j=1}^m p_j \rho_j^{(1)}\otimes \rho_j^{(2)},
\quad p_j>0, \quad \sum_{j=1}^m p_j=1,
\end{equation}
where $\rho_j^{(1)}$ and  $\rho_j^{(2)}$ are states in the
system $\mathcal{H}_1$ and $\mathcal{H}_2$ respectively. 
\end{defin}
\begin{defin}
\par A state is said to be entangled, if it is not separable
by the above definition. Entangled states can be pure or
mixed. For an entangled pure state
$\rho=\ket{\psi}\bra{\psi}$, $\ket{\psi}$ is called an
entangled (unit) vector and any non-zero multiple of
$\ket{\psi}$ is called an entangled vector.  
\end{defin}
\par If the state is pure and separable, then it can be written in the
form $|\psi\rangle =|\psi_1\rangle \otimes |\psi_2\rangle$, and
hence $\rho=|\psi_1 \rangle\langle \psi_1| \otimes |\psi_2
\rangle\langle \psi_2|$. 
If we take partial trace with
respect to any of the subsystems, say $\mathcal{H}_2$,
then we get a pure state $\tr_{\mathcal{H}_2}\rho=|\psi_1
\rangle\langle \psi_1|$ as the reduced density matrix. On
the other hand, for an entangled pure state we always get a
mixed state after a partial trace.
Hence, a pure state is separable if and only if the reduced
density matrices are of rank one. This method does not work
for mixed states. 

\par We also
 consider multi-partite quantum systems, where the state
space given by  $\mathcal{H}=
\mathcal{H}_1 \otimes \cdots \otimes \mathcal{H}_k$; or in
short, $\bigotimes_{j=1}^k \mathcal{H}_j$. A product vector in this
multipartite system space is written  as 
$\ket{x_1}\otimes \cdots \otimes \ket{x_k}$, with
$\ket{x_j}\in \mathcal{H}_j$ or as 
$\ket{x_1, \cdots,x_k}$ or in short as $\bigotimes_{j=1}^k\ket{x_j}$.
The state of the system $\mathcal{H}$ can be entangled or
separable.  {\bf An important open problem in the field is
to determine whether an arbitrary  state $\rho$ of an arbitrary
quantum system ${\mathcal H}$, is entangled or separable.}
For further details regarding entanglement we refer the survey
article written by Horodecki et. al.~ \cite{RevModPhys.81.865}.

\par For general states, a very important one way condition
to check entanglement is by using \emph{partial transpose} (PT). If a
quantum state becomes non-positive after PT then
it is called NPT and if it remains positive after partial
transpose it is called PPT. NPT states are definitely
entangled and separable states are definitely PPT while PPT
states can be entangled or separable. PPT entangled states
are also called bound entangled states and their
characterization into entangled and separable is a major
open issue in the field. Checking PPT condition is also
known as the `Peres test' because of the significant work by
Peres \cite{P1}.  As remarked by DiVincenzo et.
al.~\cite{T2}, in the case of multipartite systems, the PPT
condition can not be used directly. We can check the PPT
property under every possible bipartite partitioning of the
state. We discuss this process in some  detail because of
its use in our work.

\begin{defin}\label{def:pt}
Let, for $1 \leq j \leq k$, $\{\ket{p_j}:
p_j=0,1,\cdots,\dim(\mathcal{H}_j)-1\}$ be an orthonormal
basis in $\mathcal{H}_j$. Let  $\rho\in
\mathcal{L}(\mathcal{H}_1\otimes \cdots \otimes
\mathcal{H}_k)$ be an operator. Then $\rho$ can 
be expressed in the form  
\begin{equation} \label{eq:pt1}
\rho=\sum_{p_1,q_1=0}^{\dim \mathcal{H}_1 -1}\cdots
\sum_{p_k,q_k=0}^{\dim \mathcal{H}_k -1}
\rho_{p_1,\cdots,p_k; q_1,\cdots,q_k} \ket{p_1,
 \cdots, p_k} \bra{q_1, \cdots, q_k}.
\end{equation}
The partial transpose of $\rho$, with respect to the $j$th
system, is given by 
\begin{multline}\label{eq:pt2}
\rho^{PT_j}=\sum_{p_1,q_1=0}^{\dim \mathcal{H}_1 -1}\cdots
\sum_{p_k,q_k=0}^{\dim \mathcal{H}_k -1}
\rho_{p_1,\cdots,p_k;q_1,\cdots,q_k} \\ \ket{p_1,
\cdots, p_{j-1}, q_j, p_{j+1}, \cdots, p_k} \bra{q_1,
\cdots, q_{j-1}, p_j, q_{j+1}, \cdots, q_k}.
\end{multline}

If for a state $\rho$,
$\rho^{PT_j}$ is positive, then $\rho$ is said to be
positive under partial transpose at the $j$th level, in
short, PPT$_j$. If a state $\rho$ is not PPT${}_j$, then it is
said to be not positive under partial transpose at the $j$th level, in
short, NPT$_j$.

\end{defin}

\begin{rem}
\noindent
\begin{enumerate}[(i)]
\item It is a fact that the property PPT${}_j$ is independent
of the choice of orthonormal basis in $\mathcal{H}_j$. 
\item In case of any bipartite system $\rho$, it is said to
be PPT if it is PPT${}_1$ or PPT${}_2$ (in this case PPT${}_1$
implies PPT${}_2$  and vice versa).
\item Woronowicz~\cite{wor} showed that, a state in
$\mathbb{C}^2\otimes \mathbb{C}^2$, $\mathbb{C}^2\otimes
\mathbb{C}^3$ or $\mathbb{C}^3\otimes \mathbb{C}^2$ is
separable if and only if it is PPT. For higher dimensions, PPT
is necessary, but not sufficient for separability and there are
examples of entangled states which are PPT.  
First examples of such states were constructed by
Choi~\cite{choi4} for $3\otimes3$,  Woronowicz~\cite{wor} for
$2\otimes4$ and later by St{\o}rmer~\cite{sto2}
for $3\otimes 3$.
\end{enumerate}
\end{rem}

\begin{defin}
 For any proper subset $E$ of $\{1,2,\cdots,k\}$ and its
complement $E'$ in $\{1,\cdots,k\}$ let
$\mathcal{H}(E)=\bigotimes_{j\in E} \mathcal{H}_j$ and
$\mathcal{H}(E')=\bigotimes_{j\in E'} \mathcal{H}_j$. Then
$\mathcal{H}=\mathcal{H}(E)\otimes\mathcal{H}(E')$. Any such
decomposition is called  a bipartite cut. 
A state $\rho\in
\mathcal{H}$ is said to be positive under partial transpose,
in short,  PPT if it is PPT under any
bipartite cut. 
\end{defin}

\begin{rem} \label{rem:cut}
Obviously if $\rho$ is PPT then $\rho$
is PPT$_j$ for each $j$; all we need to do is is to take
$E=\{j\}$. In other words, if $\rho$ is
NPT$_j$ for some $j$, then it is NPT.
\end{rem}
%%%%%%%%%%%%%%%%%%%%%%%%%%%%%%%%%%%%%%%%%%%%%%%%%%
\subsection{Unextendable product bases} 
One well studied way to construct PPT entangled states
was given by Bennett et. al.~\cite{B2} by using
unextendable product basis.
\begin{defin}\label{def:upb}
An incomplete set of product vectors
$\mathcal{B}$ in the Hilbert space
$\mathcal{H}=\bigotimes_{j=1}^k \mathcal{H}_j$ is 
called unextendable if the space $\langle
\mathcal{B}\rangle^\perp$ does not contain any product
vector. The vectors in the set $\mathcal{B}$ 
are usually taken as orthonormal and are called unextendable
product bases, abbreviated as UPB.
\end{defin}
To avoid trivialities, we assume $\dim \mathcal{H}_j =d_j \geq
2$. Let $D=d_1d_2\cdots d_k$. Bennett et. al.~\cite{B2} gave
three examples of UPB for bipartite and tripartite systems
namely, PYRAMID, TILES and SHIFT.  We state the key theorem of
Bennett et. al.~\cite{B2} which allows one to construct PPT
entangled states from UPB and which is relevant to this paper.
\begin{customthm}{A}\label{th:bbd}~\cite{B2}
If in the Hilbert space $\mathcal{H} = \bigotimes_{j=1}^k
\mathcal{H}_j$ of dimension
$D=d_1\cdots d_k$, as above, there is a mutually orthonormal set of
unextendable product basis : $\{|\psi_s\rangle: s=1,\cdots,
d\}$, then the state 
\begin{equation}
\rho=\frac{1}{D-d}\left(I_D-\sum_{s=1}^d
|\psi_s\rangle\langle\psi_s|\right),\label{eq:bbd}
\end{equation}
where $I_D$ is the identity operator on $\mathcal{H}$, is an entangled state
which is PPT. 
\end{customthm}

\noindent The proof depends on the orthogonality of the basis
vectors $\ket{\psi_s}$.

\par The above theory was further extended by DiVincenzo et. al.~\cite{T2} to
include generalizations of the earlier examples to multipartite systems and a
complete characterization of UPB in $\mathbb{C}^3 \otimes \mathbb{C}^3$. There
is a large volume of literature in this area. Recently, Johnston has given
explicit computation of four qubit UPB~\cite{2014arXiv1401.7920J}.

\subsection{Entangled subspaces} Let  $\mathcal{H}=
\mathcal{H}_1\otimes\cdots \otimes \mathcal{H}_k$, where for
$1 \leq j \leq k$, $\mathcal{H}_j=
\mathbb{C}^{d_j}$ for some $d_j<\infty$ as above.
Wallach~\cite{MR1947343} considered the question of the maximal possible
dimension of a subspace $\mathcal{S}$  of $\mathcal{H}$ where each nonzero
vector is an entangled state. He called such 
subspaces entangled subspaces, as
they do not contain any nonzero product vector.  He showed that

\begin{customthm}{B}\cite{MR1947343} The dimension of a subspace, where each
vector is entangled, is $\leq d_1\cdots d_k -(d_1+\cdots +d_k) +k-1$.
Furthermore, this upper bound is attained.  
\end{customthm}
%---------------------------------------------------------
\subsection{Parthasarathy's construction}\label{ssec:krp}
Parthasarathy \cite{krp1} gave an explicit
construction of such entangled subspaces where the maximal
dimension is attained.  We note that Parthasarathy
calls such subspaces completely entangled subspaces. Let
$\mathcal{H}=\mathcal{H}_1 \otimes \cdots \otimes
\mathcal{H}_k$ be as above. 
Let $\lambda\in\mathcal{C}$. For $1\leq j \leq k$, let 
\begin{equation}
v_{\lambda, j}=
\begin{pmatrix}
1\\
\lambda\\
\lambda^2\\
\vdots\\
\lambda^{d_j-1}
\end{pmatrix}\equiv \sum_{x=0}^{d_j-1} \lambda^x \ket{x};
\end{equation}
where $\{\ket{x}:x=0,1,\cdots, d_j-1\}$ is the standard
basis of $\mathcal{H}_j=\mathbb{C}^{d_j}$. Set 
\begin{equation}
\ket{v_\lambda}\equiv v_{\lambda,1} \otimes \cdots \otimes
v_{\lambda,k}=\bigotimes_{j=1}^kv_{\lambda,j}.
\end{equation}
\par Set $N=\sum_{j=1}^k (d_j -1)=  \sum_{j=1}^k d_j -k$.
Choose any $(N+1)$ distinct complex numbers
$\lambda_0,\lambda_1,\cdots, \lambda_N$ and denote the
linear span of $\{v_{\lambda_n}: 0\leq n \leq N\}$ by
$\mathcal{F}$, i.e. $\mathcal{F}=\langle v_{\lambda_n}:
0\leq n \leq N \rangle $. Then $\{ v_{\lambda_n}:0\leq n
\leq N\}$ is a basis of $\mathcal{F}$. Consider the subspace
$\mathcal{S}=\mathcal{F}^\perp$.

\par It has been shown in~\cite{krp1}
that the space $\mathcal{S}$ does not contain any product vector and
is of dimension $M=d_1\cdots d_k-(d_1+\cdots+d_k)+k-1$.

\par Simple computations show that the basis vectors of
$\mathcal{F}$ need not all be orthogonal, but certain subspaces of
$\mathcal{F}$ can contain orthonormal basis of product
vectors. 

\par Another strong point in this paper is an explicit
construction of an orthonormal basis for $\mathcal{S}$ in
the case $k=2,~d_1=d_2$. We shall come back to this later
in \S \ref{ssuc:basiskrp} below.

%%%%%%%%%%%%%%%%%%%%%%%%%%%%%%%%%%%%%%%%%%
\subsection{Bhat's construction~\cite{bhat}}\label{subsc:bhat}
For notational convenience, he starts with an infinite
dimensional space with an orthonormal basis  $\{e_0,e_1,\cdots\}$ and
identifies
$\mathcal{H}_r=\langle\{e_0,\cdots,e_{d_r -1}\}\rangle,~1\leq
r \leq k$,  and sets 
$\mathcal{H}=\mathcal{H}_1\otimes \cdots \otimes
\mathcal{H}_k$. 

Let $N=\sum_{r=1}^k (d_r-1)$. For $0\leq n \leq N$, let
$\mathcal{I}_n=\{\mathbf{i}=(i_r)_{r=1}^k,~ 0\leq i_r
\leq d_r-1 \text{ for } 1\leq r \leq k,~\sum_{r=1}^k i_r=n\}$. 
Let
$\mathcal{I}=\bigcup_{n=0}^N\mathcal{I}_n$. For $\mathbf{i} \in
\mathcal{I}$, let $e_{\mathbf{i}}=\bigotimes_{r=1}^k e_{i_r}$. 
For
$0\leq n\leq N$, let $\mathcal{H}^{(n)}=\langle \{ e_{\mathbf{i}}:
\mathbf{i}\in\mathcal{I}_n\}\rangle$. Then  $\{e_{\mathbf{i}}:\mathbf{i}\in
\mathcal{I}_n\}$ is an orthonormal basis for
$\mathcal{H}^{(n)}$. Further, $\mathcal{H}=\bigoplus_{n=0}^N
\mathcal{H}^{(n)}$ and    $\{e_{\mathbf{i}}:\mathbf{i}\in
\mathcal{I}\}$ is an orthonormal basis for $\mathcal{H}$.
Let  $0\leq n\leq N$. Let
$u_n=\sum_{\mathbf{i}\in\mathcal{I}_n} e_{\mathbf{i}}$. 
Let $\mathcal{T}^{(n)}=\mathbb{C}u_n$, then
$\mathcal{H}^{(n)}=\mathcal{S}^{(n)}\bigoplus T^{(n)}$, where 
\[\mathcal{S}^{(n)}=\mathrm{span}\{e_{\mathbf{i}}-e_{\mathbf{j}}:\mathbf{i},\mathbf{j}\in
\mathcal{I}_n\}. \]
Clearly $\mathcal{S}^{(n)}$ is also equal to the set of all the sums
$\sum_{\mathbf{i}\in\mathcal{I}_n}\alpha_{\mathbf{i}}e_{\mathbf{i}}$
such that $\sum_{\mathbf{i}\in\mathcal{I}_n}\alpha_{\mathbf{i}}=0$.
Further, $\mathcal{S}^{(0)}=\{0\}=\mathcal{S}^{(N)}$. Let
 $\mathcal{T}=\bigoplus_{n=0}^N\mathcal{T}^{(n)}$ and
$\mathcal{S}_B=\bigoplus_{n=0}^N \mathcal{S}^{(n)}$, which
is the same as $\bigoplus_{n=1}^{N-1} \mathcal{S}^{(n)}$. Then
$\mathcal{S}_B^\perp=\mathcal{T}$ and
$\mathcal{H}=\mathcal{S}_B\oplus \mathcal{T}$.
\begin{customthm}{C}\cite{bhat}
$\mathcal{S}_B$ is a completely entangled subspace of maximal
dimension.
\end{customthm}

\begin{rem}\label{rem:bhat}
\noindent
\begin{enumerate}[(i)]
\item We note that for $\lambda\in\mathbb{C}$,
\begin{eqnarray}
\ket{z^\lambda} & \equiv & \left(\sum_{j_1=0}^{d_1-1}
\lambda^{j_1}
e_{j_1}\right) \otimes \cdots \otimes
\left(\sum_{j_k=0}^{d_k-1} \lambda^{j_k}
e_{j_k}\right)\nonumber\\
& = & \sum_{n=0}^N \lambda^n
\left(\sum_{\mathbf{i}\in\mathcal{I}_n}
e_{\mathbf{i}}\right)\\
&=& \sum_{n=0}^N \lambda^n u_n.\nonumber
\end{eqnarray} 
 
\item We now consider  $\mathcal{H}_r$'s as subspaces of
$\mathbb{C}^\delta$, with $\delta=\max_{j=1}^k d_j$ and $e_s
\equiv \ket{s}$ for $1\leq s \leq \delta$. So we can identify
$\ket{v_\lambda}$ and $\ket{z^\lambda}$.  Let $\lambda_n,~0\leq n \leq
N$ be distinct complex numbers as in \S \ref{ssec:krp}. 
Then $\{\ket{v_{\lambda_n}}: 0 \leq n \leq N\}$
is a linearly independent subset of $\mathcal{T}$. So
$\mathcal{F}=\mathcal{T}$. This also shows that $\mathcal
{F}$ is independent of the choice of complex numbers. Thus 
\[\mathcal{S}=\mathcal{F}^\perp=\mathcal{T}^\perp=\mathcal{S}_B.\]
\end{enumerate}
\end{rem}

\begin{customthm}{D}\cite{bhat}
The set of product vectors in $\mathcal{S}^\perp=\mathcal{T}$ is 
\[\{c\ket{z^\lambda}: c\in \mathbb{C}, \lambda \in\mathbb{C} \cup
\{\infty\}\};\]
where $\ket{z^\infty} = \bigotimes_{r=1}^k e_{d_r-1}$.
 \end{customthm}

%------------------------------------------------
\subsection{Johnston's construction \cite{PhysRevA.87.064302}}
Johnston concentrated on constructing a completely entangled subspace
$\mathcal{S}_J$ of $\mathbb{C}^{d_1}\otimes\mathbb{C}^{d_2}$
of dimension $(d_1-1)(d_2-1)$ for bipartite systems such that 
every density matrix with range contained in it is NPT. In the 
notation  of Subsections~\ref{ssec:krp} and~\ref{subsc:bhat}, 
\begin{equation}
\mathcal{S}_J =\left\langle \{ w_{x,y}=\ket{x}\otimes \ket{y+1} -
\ket{x+1}\otimes  \ket{y}: 0 \leq x \leq d_1-2, 0\leq y \leq d_2 -2
\} \right \rangle.
\end{equation}
We end this subsection with our theorem which establishes an
interesting and useful link between different constructions of 
completely entangled subspaces.
\begin{theorem}
For the bipartite case, the completely entangled spaces 
$\mathcal{S},~\mathcal{S}_B$ and $\mathcal{S}_J$ can be 
identified with each other. 
\end{theorem}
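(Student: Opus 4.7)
My plan is to first observe that Remark~\ref{rem:bhat}(ii) already yields $\mathcal{S} = \mathcal{S}_B$, so the task reduces to identifying Johnston's space $\mathcal{S}_J$ with either of them. I will choose $\mathcal{S}_B$ as the intermediary: I will prove $\mathcal{S}_J = \mathcal{S}_B$ by establishing both inclusions directly, since the generators on the two sides are built from the same ingredients---differences of standard product-basis vectors lying in a common graded piece $\mathcal{H}^{(n)}$.

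For the forward inclusion $\mathcal{S}_J \subseteq \mathcal{S}_B$, I will check that each generator $w_{x,y} = \ket{x}\otimes \ket{y+1} - \ket{x+1}\otimes \ket{y}$ belongs to some $\mathcal{S}^{(n)}$. Indeed both product-basis vectors appearing in $w_{x,y}$ have index-sum equal to $n := x+y+1$, so they lie in $\mathcal{H}^{(n)}$, and since their coefficients $+1$ and $-1$ sum to zero, $w_{x,y} \in \mathcal{S}^{(n)}$. The constraints $0 \leq x \leq d_1-2$ and $0 \leq y \leq d_2-2$ force $1 \leq n \leq N-1$, whence $\mathcal{S}^{(n)} \subseteq \mathcal{S}_B$.

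For the reverse inclusion $\mathcal{S}_B \subseteq \mathcal{S}_J$, I will use a telescoping argument. The space $\mathcal{S}^{(n)}$ is spanned by the differences $e_{\mathbf{i}} - e_{\mathbf{j}}$ with $\mathbf{i}, \mathbf{j} \in \mathcal{I}_n$. Writing $\mathbf{i} = (i_1, n-i_1)$ and $\mathbf{j} = (j_1, n-j_1)$, and assuming without loss of generality that $i_1 \leq j_1$, I will chain the endpoints through the lattice path $(i_1, n-i_1), (i_1+1, n-i_1-1), \ldots, (j_1, n-j_1)$ to obtain
\[
e_{\mathbf{i}} - e_{\mathbf{j}} \;=\; \sum_{a=i_1}^{j_1 - 1} \bigl( e_{(a,\,n-a)} - e_{(a+1,\,n-a-1)} \bigr) \;=\; \sum_{a=i_1}^{j_1 - 1} w_{a,\,n-a-1}.
\]
The only thing to check is that each $w_{a,n-a-1}$ appearing on the right is a bona fide Johnston generator, that is, $0 \leq a \leq d_1-2$ and $0 \leq n-a-1 \leq d_2-2$. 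This follows from $\mathbf{i}, \mathbf{j} \in \mathcal{I}_n$ together with $i_1 \leq a \leq j_1 - 1$, since the second coordinate $n-a$ stays between $n-j_1 = j_2 \geq 0$ and $n-i_1 = i_2 \leq d_2 - 1$.

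The proof is essentially conceptual, and I do not anticipate a real obstacle. The only place demanding care is the index-range verification in the telescoping step, and even that reduces to the observation just made. As a cross-check, the counts agree: Parthasarathy's dimension formula gives $\dim \mathcal{S} = (d_1-1)(d_2-1)$ in the bipartite case, matching exactly the number of generators $w_{x,y}$ of $\mathcal{S}_J$, which both inclusions then confirm must be linearly independent.
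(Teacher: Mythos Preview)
Your proof is correct and largely mirrors the paper's: both invoke Remark~\ref{rem:bhat}(ii) for $\mathcal{S}=\mathcal{S}_B$ and both establish $\mathcal{S}_J\subseteq\mathcal{S}_B$ by observing $w_{x,y}\in\mathcal{S}^{(x+y+1)}$. The only divergence is in closing the equality: the paper simply notes $\dim\mathcal{S}_B=(d_1-1)(d_2-1)=\dim\mathcal{S}_J$ and stops, whereas you prove $\mathcal{S}_B\subseteq\mathcal{S}_J$ constructively via the telescoping identity $e_{\mathbf{i}}-e_{\mathbf{j}}=\sum_{a=i_1}^{j_1-1} w_{a,\,n-a-1}$ (and relegate the dimension count to a cross-check). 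Your route is slightly longer but has the merit of exhibiting the Johnston generators explicitly as a spanning set for each $\mathcal{S}^{(n)}$; the paper's route is a one-liner once the forward inclusion is in hand. One tiny imprecision: for $a$ ranging over $i_1,\dots,j_1-1$ the quantity $n-a$ actually lies in $[j_2+1,\,i_2]$, not $[j_2,\,i_2]$, but this only strengthens the bound you need and does not affect the argument.
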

\begin{proof}
In view of Remark~\ref{rem:bhat} and the discussion in this
section, we only need to note that for $0\leq
x \leq d_1 -2$ and $0\leq y \leq d_2-2$, $w_{x,y} \in
\mathcal{S}^{(x+y+1)}$. Thus
$\mathcal{S}_J\subseteq\mathcal{S}_B$. But $\dim
\mathcal{S}_B =(d_1-1)(d_2-1)=\dim\mathcal{S}_J$. Hence
$\mathcal{S}_B=\mathcal{S}_J$. 
\end{proof}

%------------------------------------------------
\subsection{Parthasarathy's orthonormal basis for $\mathcal{S}$ for
bipartite case of equal dimensions~\cite{krp1}}\label{ssuc:basiskrp}

\par We need the following explicit construction of the
orthonormal 
basis $\mathcal{B}$ of $\mathcal{S}$ given in \cite{krp1}
for the bipartite case $\mathcal{H}=\mathcal{H}_1 \otimes
\mathcal{H}_2$, with $d_1 = d_2 =\nu$, say.
\begin{enumerate}[(a)]
\item Antisymmetric vectors: 
\begin{equation*}\label{antsym}
\ket{a_{x,y}}=\frac{1}{\sqrt{2}}(|xy\rangle
-|yx\rangle),~~0\leq x<y \leq \nu-1.
\end{equation*}
\item For $2\leq n \leq \nu-1$ and $n$ even, vectors of the
forms :
\begin{gather*}
\ket{b_0^n}= \frac{1}{\sqrt{n(n+1)}} \left( \sum_{m=0}^{\frac{n}{2}-1}
(|m, n-m\rangle + |n-m, m\rangle) -n
\left|\frac{n}{2},\frac{n}{2}\right\rangle \right),
\label{sym1}
\quad \text{and}\\ 
\ket{b_p^n}= \frac{1}{\sqrt{n}} \sum_{m=0}^{\frac{n}{2}-1}
\exp\left(\frac{4\pi\imath mp}{n}\right)
(|m, n-m\rangle + |n-m, m\rangle), \quad 1\leq p \leq
\frac{n}{2}-1. \label{sym2} 
\end{gather*}
\item  For $2\leq n \leq \nu -1$ and $n$ odd, vectors of
the form:
\begin{equation*}\label{sym3}
\ket{b_p^n}= \frac{1}{\sqrt{n+1}} \sum_{m=0}^{\frac{n-1}{2}}
\exp\left(\frac{4\pi\imath mp}{n+1}\right)
(|m, n-m\rangle + |n-m, m\rangle), \quad 1\leq p \leq
\frac{n-1}{2}. \end{equation*}
\item  For $\nu\leq n \leq 2\nu-4$ and $n$ even, vectors of
the form:
\begin{gather*}
\begin{split}
\ket{b_0^n}= \frac{1}{\sqrt{(2\nu -2-n)(2\nu-1-n)}} \left(
\sum_{m=0}^{\frac{2\nu-2-n}{2}-1}
(|n-\nu+m+1, \nu-m-1\rangle  \right.\\+ |\nu-m-1,
n-\nu+m+1\rangle) -(2\nu-2-n)
\left. \left|\frac{n}{2},\frac{n}{2}\right\rangle
\right),\quad \text{and} \end{split} \label{sym4}
\\ 
\begin{split}
\ket{b_p^n}= \frac{1}{\sqrt{2\nu-2-n}} \sum_{m=0}^{\frac{2\nu-2-n}{2}-1}
\exp\left(\frac{4\pi\imath mp}{2\nu-2-n}\right)
(|n-\nu+m+1, \nu-m-1\rangle \\ 
 + |\nu-m-1, n-\nu+m+1\rangle), \quad 1\leq p \leq
\frac{2\nu-2-n}{2}-1. \end{split}\label{sym5}
\end{gather*}
\item  For $\nu\leq n \leq 2\nu-4$ and $n$ odd, vectors of
the form:
\begin{equation*}\label{sym6}\begin{split}
\ket{b_p^n}= \frac{1}{\sqrt{2\nu-1-n}} \sum_{m=0}^{\frac{2\nu-1-n}{2}-1}
\exp\left(\frac{4\pi\imath mp}{2\nu-1-n}\right)
\left(|n-\nu+m+1, \nu-m-1\rangle \right.\\ 
\left. + |\nu-m-1, n-\nu+m+1\rangle\right	), \quad 1\leq p \leq
\frac{2\nu-1-n}{2}-1. \end{split}
\end{equation*}
\end{enumerate}

\begin{rem}\label{rem:pack}
\noindent 
\begin{enumerate}[(i)]
\item An interesting aspect of $\mathcal{B}$ is that  for $1
\leq n \leq 2 \nu -3$, $\mathcal{B}_n =\mathcal{B} \cap
\mathcal{S}^{(n)}$ is an orthonormal basis for $\mathcal{S}^{(n)}$.
\item $\mathcal{B}_1=\{\ket{a_{0,1}}\}$ and
$\mathcal{B}_{2\nu -3}=\{\ket{a_{\nu-2,\nu-1}}\}$.
\item For $1 \leq g \leq \nu -2 $, $\ket{f_g}=\ket{g} \otimes
\ket{g}$ occurs as  a summand of  exactly one vector  in
$\mathcal{B}$. Further, $\ket{f_{\nu-1}}= \ket{\nu-1}\otimes
\ket{\nu-1}$ does not occur as a summand of vectors in $\mathcal{B}$.
\item Let $\mathbb{F}:\mathcal{H}_1\otimes \mathcal{H}_2
\rightarrow \mathcal{H}_2\otimes \mathcal{H}_1$ be the
linear operator, called FLIP or SWAP, satisfying
$\mathbb{F}(\ket{\xi}\otimes \ket{\eta}) = \ket{\eta}\otimes
\ket{\xi}$ for $\ket{\xi}\in \mathcal{H}_1$ and
$\ket{\eta}\in \mathcal{H}_2$. Then
$\mathbb{F}(\ket{a_{x,y}})=-\ket{a_{x,y}}$,
whereas $\mathbb{F}(\ket{b_p^n})=\ket{b_p^n}$;
$\ket{a_{x,y}}$ and $\ket{b_p^n}$ are as above. 
\end{enumerate}
\end{rem}
%==================================================

\subsection{Bhat's orthonormal basis for $\mathcal{S}$}
Bhat~\cite{bhat} indicated how to construct an
orthonormal basis for $\mathcal{S}$. He has also given
expressions for dimensions of $\mathcal{H}^{(n)}$ viz.,
$|\mathcal{I}_n|$ for $ 1\leq n \leq N$. In fact,
$\mathcal{I}_n=$ the coefficient of $x^n$  in the
polynomial $p(x) =\prod_{r=1}^k (1+x+\cdots +x^{d_r-1}) =$
number of partitions of $n$ into $(i_1,\cdots, i_k)$ with $
0\leq i_r \leq d_r-1$ for $1\leq r \leq k$. For instance, for $k=2, ~  d_1
\leq d_2 $, 
\[|\mathcal{I}_n| =\left\{
\begin{array}{lll}
n+1 & \text{ for }& 0\leq n \leq d_1-1 \\
d_1 &\text{ for }&d_1-1< n \leq d_2 -1 \\
d_1+d_2 -(n+1) &\text{ for }& d_2 -1 < n \leq d_1 +d_2 -2.
\end{array}\right.
\]
If $d_i=2$ for all $i$, then $|\mathcal{I}_n|
=\binom{k}{n},~0\leq n \leq k$. 

%--------------------------------------------------
\subsection{Two useful techniques}
\par We now display techniques to be used in constructing
an orthonormal basis for the general bipartite and
multipartite case suitable for our purpose. 

\begin{theorem} \label{th:2nd}
Let $Y$ be a $d$-dimensional Hilbert space with $2\leq d <
\infty$ and $\{\ket{y_s}: 0\leq s \leq d-1\}$ an orthonormal
basis for $Y$. Let $Z$ be the subspace $\{\sum_{s=0}^{d-1}
\alpha_s\ket{y_s}: \sum_{s=0}^{d-1}
\alpha_s=0\}$.
\begin{enumerate}[(i)]
\item If $d=2$ then $Z=\mathbb{C}(\ket{y_0}-\ket{y_1})$.
\item Let $d\geq 3$. Then there exists an orthonormal basis
$\{\ket{z_s}: 0\leq s \leq d-2\}$ for $Z$ such that $\ket{y_0}$
occurs as a summand in $\ket{z_0}$ and $\ket{z_1}$; further,
for $d>3$, $\ket{y_0}$ does not occur as a summand in
$\ket{z_s}$ for  $2\leq s \leq d-2$. 
\item Let $d\geq3$. Let $1\leq r \leq d-2$. Let 
\[Z_r^1
=\left\{\sum_{s=0}^r \alpha_s \ket{y_s}: \sum_{s=0}^r
\alpha_s=0\right\} = Z\cap \langle \{\ket{y_s} : 0\leq s \leq
r\}\rangle,\] 
and
\[Z_r^2 =\left\{\sum_{s=r+1}^{d-1} \alpha_s \ket{y_s}:
\sum_{s=r+1}^{d-1} \alpha_s =0\right\}.\]
Let $\mathcal{C}_r^1 = \{\ket{z_s}: 0\leq s \leq r-1 \}$
 be an orthonormal basis for $Z_r^1$
such that $\ket{y_0}$ occurs as a summand in $\ket{z_0}$ and
in no other $\ket{z_s}$ for $s\leq r-1$. Then there exists
an orthonormal basis $\{\ket{z_s}:0\leq s \leq d-2\}$ for
$Z$ such that $\ket{y_0}$ occurs as a summand in $\ket{z_0}$ and
 $\ket{z_r}$ and in no other $\ket{z_s}$ for $0\leq s \leq
d-2$. 
\end{enumerate} 
\end{theorem}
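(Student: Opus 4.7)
The plan is to treat the three parts in succession, using the fact that (ii) and (iii) reduce to explicit constructions together with a short dimension count.

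Part (i) is immediate: for $d=2$ the single defining relation $\alpha_0+\alpha_1=0$ forces $Z=\mathbb{C}(\ket{y_0}-\ket{y_1})$.

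For part (ii) my approach is to choose $\ket{z_0}$ and $\ket{z_1}$ explicitly so as to pin down where $\ket{y_0}$ appears, and then to verify that the orthogonal complement of their span inside $Z$ lies entirely in $\langle\ket{y_s}:s\geq 2\rangle$. The natural candidates are
\begin{equation*}
\ket{z_0}=\tfrac{1}{\sqrt{2}}(\ket{y_0}-\ket{y_1}),\qquad
\ket{z_1}=\tfrac{1}{\sqrt{2d(d-2)}}\Bigl((d-2)(\ket{y_0}+\ket{y_1})-2\sum_{s=2}^{d-1}\ket{y_s}\Bigr).
\end{equation*}
A direct check shows both lie in $Z$, are unit vectors, are mutually orthogonal, and both contain $\ket{y_0}$ as a summand. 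If $v=\sum_s\gamma_s\ket{y_s}\in Z$ is orthogonal to both, the three conditions $\gamma_0=\gamma_1$, $(d-2)\gamma_0=\sum_{s\geq 2}\gamma_s$, and $\sum_s\gamma_s=0$ combine to give $d\gamma_0=0$, so $\gamma_0=\gamma_1=0$. Hence the orthogonal complement has dimension $d-3$ and sits inside the span of $\ket{y_2},\dots,\ket{y_{d-1}}$; any orthonormal basis $\ket{z_2},\dots,\ket{z_{d-2}}$ for this complement then automatically satisfies the ``no $\ket{y_0}$-summand'' requirement.

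For part (iii) the plan is to decompose $Z=Z_r^1\oplus\mathbb{C}\omega_r\oplus Z_r^2$, where $\omega_r$ is characterised inside $Z$ as being orthogonal to both $Z_r^1$ and $Z_r^2$. Orthogonality to $Z_r^1$ forces the coefficients $\gamma_0,\dots,\gamma_r$ to be a common constant $a$; orthogonality to $Z_r^2$ forces $\gamma_{r+1},\dots,\gamma_{d-1}$ to be a common constant $c$; and the sum-zero condition then yields $(r+1)a+(d-1-r)c=0$. Normalising gives
\begin{equation*}
\ket{z_r}=\tfrac{1}{\sqrt{d(r+1)(d-1-r)}}\Bigl((d-1-r)\sum_{s=0}^{r}\ket{y_s}-(r+1)\sum_{s=r+1}^{d-1}\ket{y_s}\Bigr),
\end{equation*}
which contains $\ket{y_0}$ as a summand since $r\leq d-2$. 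I then take $\ket{z_0},\dots,\ket{z_{r-1}}$ from the given basis $\mathcal{C}_r^1$ (where $\ket{y_0}$ appears only in $\ket{z_0}$ by hypothesis) and complete with any orthonormal basis $\ket{z_{r+1}},\dots,\ket{z_{d-2}}$ of $Z_r^2$, which lives in $\langle\ket{y_{r+1}},\dots,\ket{y_{d-1}}\rangle$ and therefore involves no $\ket{y_0}$-summand. The dimensions add up as $r+1+(d-r-2)=d-1=\dim Z$, and exactly $\ket{z_0}$ and $\ket{z_r}$ carry $\ket{y_0}$.

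I do not expect a genuine obstacle here; the only thing to be careful about is the bookkeeping in (iii), namely that the bridge vector $\ket{z_r}$, being constant on each of the two blocks $\{0,\dots,r\}$ and $\{r+1,\dots,d-1\}$, is automatically orthogonal to every zero-sum combination supported on either block, so orthogonality with $\mathcal{C}_r^1$ and with the chosen basis of $Z_r^2$ is essentially free.
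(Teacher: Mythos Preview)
Your proof is correct and follows essentially the same approach as the paper: the explicit vectors $\ket{z_0},\ket{z_1}$ in (ii) and the bridge vector $\ket{z_r}$ in (iii) are exactly those the paper constructs, and the remaining basis vectors are chosen from $Z_1^2$ (resp.\ $Z_r^2$) just as the paper does. The only cosmetic difference is that you phrase (iii) as an orthogonal decomposition $Z=Z_r^1\oplus\mathbb{C}\ket{z_r}\oplus Z_r^2$ from the outset, whereas the paper first characterises $(Z_r^1)^\perp\cap Z$ and then intersects with $\ket{z_r}^\perp$; the content is the same.
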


\begin{proof}
\begin{enumerate}[(i)]
\item is immediate.
\item Let $\ket{z_0}=\frac{1}{\sqrt{2}}
(\ket{y_0}-\ket{y_1}), ~ \ket{\eta} =  (\ket{y_0}+\ket{y_1})$
and $\ket{v}=\sum_{s=2}^{d-1} \ket{y_s}$. Set 
\[ \ket{z_1}=\frac{(d-2)\ket{\eta} -2 \ket{v} }{\sqrt{2d(d-2)}}.\]
Then $\ket{y_0}$ occurs as a summand in $\ket{z_0}$ and
$\ket{z_1}$. 

\par We now consider the case $d>3$ and follow the
notation in (iii). We choose any orthonormal basis for
$Z_1^2$. For instance, we may choose the Fourier basis
\[\ket{z_p}=\frac{1}{\sqrt{d-2}}\sum_{s=2}^{d-1} \exp\left[ \frac{2\pi\imath
(s-2) (p-1)}{d-2}\right]\ket{y_s},\quad 2\leq p \leq d-2.\]

\item Let $\ket{\eta}=\sum_{s=0}^r \ket{y_s},~\ket{v} =
\sum_{s=r+1}^{d-1} \ket{y_s}$. Consider any $\ket{\xi}
=\sum_{s=0}^{d-1} \alpha_s \ket{y_s}$. For $0\leq s' \neq
s'' \leq r$, $\ket{y_{s'}} - \ket{y_{s''}} \in Z_r^1$. So
$\ket{\xi}\perp Z_r^1$ only if $\alpha_{s'} = \alpha_{s''}$
for $s'\neq s''$ with  $0 \leq s' \neq s'' \leq r$. Thus any
such vector has the form
\begin{equation}\label{eq:61}
\ket{\xi} =\alpha \ket{\eta} + \sum_{s=r+1}^{d-1} \alpha_s
\ket{y_s} \quad \text{with} \quad (r+1)\alpha +
\sum_{s=r+1}^{d-1} \alpha_s =0.
\end{equation}
Also any $\ket{\xi}$ of the form as in (\ref{eq:61}) is orthogonal
to $Z_r^1$. Set 
\[\ket{z_r}= \frac{(d-1-r) \ket{\eta} -(r+1)
\ket{\nu}}{\sqrt{d(r+1)(d-r-1)}}.\]
Then $\ket{y_0}$ occurs as a summand in $\ket{z_r}$. 

\par We now consider the case $r\leq d-3$, which forces 
$d\geq4$ for sure. Now
$\ket{\xi}$ as in (\ref{eq:61}), satisfies
$\langle\xi|z_r\rangle =0$ if and only if $\alpha=0$ if and
only if $\sum_{s=r+1}^{d-1} \alpha_s =0$ if and only if $\ket{\xi}$
has the form 
\[\ket{\xi} =\sum_{s=r+1}^{d-1} \alpha_s \ket{y_s}, \quad
\sum_{s=r+1}^{d-1} \alpha_s =0 \quad \text{ if and only if }
\quad \ket{\xi} \in Z_r^2.\]
As in the proof of (ii),  we choose any orthonormal basis
for $Z_r^2$. for instance, we may choose the Fourier basis,
\[ \ket{z_p} = \frac{1}{\sqrt{d-1-r}}\sum_{s=r+1}^{d-1} \exp\left[\frac{2\pi
(s-r-1) (p-r) } {d-1-r} \right] \ket{y_s},\quad r+1 \leq p
\leq d-2. \]
Then $\ket{y_0}$ does not occur as a summand in $\ket{z_p},~
r+1 \leq p \leq d-2$.

\end{enumerate}
\end{proof}

\subsection{Orthonormal basis for $\mathcal{S}$ (general
case)} \label{ssu:or}

\par We shall now construct a suitable orthonormal basis for
$\mathcal{S}$  in our multipartite system
$\mathcal{H}=\bigotimes_{j=1}^k \mathcal{H}_j$. Let $1\leq j
\neq j' \leq k$.  Set $\nu =\min \{d_j, d_{j'}\}$ and $\nu'
=\max \{d_j, d_{j'}\}$. We concentrate on the case $(k-2) +
(\nu' -\nu)>0$, as the remaining case $k=2,~ \nu=\nu'$ comes
under \S\ref{ssuc:basiskrp} above. It is enough to
construct suitable orthonormal basis for
$\mathcal{S}^{(n)}$ for  $1 \leq n \leq N-1$, because we can
just put them together to get an orthonormal basis for
$\mathcal{S}$. Let  $1 \leq n \leq N-1$. We take
$X=\mathcal{H}^{(n)}, ~ Z= \mathcal{S}^{(n)}$ in the
above theorem. We note that $\mathcal{H}^{(n)}$ has
dimension $d=|\mathcal{I}_n|$. For $0 \leq x, x' \leq \nu -1$, we take
$\mathbf{i}^{(x,x')}\in \mathcal{I}$ given by 
\[ i_t^{(x,x')} = \left\{ 
\begin{array}{ll}
0 & t \neq j \text{ or } j'\\
x & t=j\\
x' & t=j'.
\end{array}
\right.\]
At times we shall replace $\mathbf{i}^{(x,x')}$ by
$\widetilde{(x,x')}$. For $\ket{\xi} \in \mathbb{C}^\nu
\otimes \mathbb{C}^\nu$, we take $\tilde{\ket{\xi}}$ to
be the vector in $\mathcal{H}$ which is obtained by
considering $\ket{\xi}$ as a member of $\mathcal{H}_j\otimes
\mathcal{H}_{j'}$ and then filling in the remaining places
by $\ket{0}$ (if any). Then $\tilde{\mathcal{B}_n} = \{
\tilde{\ket{\xi}}: \ket{\xi} \in \mathcal{B}_n\}$ may be
thought of as an orthonormal basis for its linear span which is a
part of $\mathcal{S}^{(n)}$. 
\par  Let 
\begin{eqnarray*}
\mathcal{I}_n^1 &=& \left\{
\begin{array}{ll}
 \{\mathbf{i}\in \mathcal{I}_n, ~  0 \leq  i_j , i_{j'} \leq
\nu - 1, \text{ and } i_t = 0  \text { for } t \neq j,j'\}, & 1 \leq n \leq 2\nu -3\\
&\\
\emptyset & \text{otherwise.}
\end{array}\right. \\
 \mathcal{I}_n^2 &=& \mathcal{I}_n \setminus \mathcal{I}_n^1.
\end{eqnarray*}
We note that $|\mathcal{I}_n^1|$ is either $0$ or $\geq 2$. For
$n= 2g$ with $1\leq g \leq \nu -1$, we take $\mathbf{i}^0
=\widetilde{(g,g)}$. For $n=2g-1,~1\leq g \leq \nu -1$ we
take $\mathbf{i}^0 =(g-1,g)$. Next, for $1\leq n \leq 2\nu
-3$, we arrange members of $\mathcal{I}_n^1\setminus
\{\mathbf{i}^0\}$ in any sequence, say $\mathbf{i}^1,\cdots
, \mathbf{i}^{|\mathcal{I}_n^1|-1}$ insisting, for
$n=2g-1,~\mathbf{i}^1 =(g , g-1)$. 
Then, we arrange members of $\mathcal{I}_n^2$, if
any, in any manner we like. 
This will complete the enumeration of $\mathcal{I}_n$ as
$0,1,\cdots, |\mathcal{I}_n|-1$. For $n=2\nu -2$, we
enumerate $\mathcal{I}_n\setminus \{\mathbf{i}^0\}$ as   $\mathbf{i}^1,\cdots
, \mathbf{i}^{|\mathcal{I}_n|-1}$. For $2\nu -1 \leq n \leq
N-1$, we enumerate $\mathcal{I}_n$ in any manner we like as
$\mathbf{i}^0, \mathbf{i}^1,\cdots,
\mathbf{i}^{|\mathcal{I}_n|-1}$. Finally, we set
$\ket{y_s}=\ket{\mathbf{i}^s},~0\leq s \leq d-1
=|\mathcal{I}_n|-1$ and, in case $1 \leq n \leq 2\nu -3$,
 $r=|\mathcal{I}_n^1|-1$.

\par  To distinguish constructions for different $n$'s, we may use
extra fixture $n$; for instance  $^n\mathbf{i}^0,~
^n\mathbf{i}^1, \cdots, \ket{\eta_n},\ket{v_n}$ etc. in
place of  $\mathbf{i}^0, \mathbf{i}^1, \cdots,
\ket{\eta},\ket{v}$. 

\par This discussion combined with Theorem~\ref{th:2nd} above
immediately gives us the following theorem. 

\begin{theorem}\label{th:3rd}
Let $\mathcal{H}=\bigotimes_{t=1}^k \mathcal{H}_t$. Let
$1\leq j \neq j' \leq k, ~ \nu =\min\{ d_j , d_{j'} \} \leq
\nu' = \max \{d_j, d_{j'} \}$ and $(k-2) + (\nu' - \nu) >0$. There exists an orthonormal basis $\mathcal{C}$
for $\mathcal{S}$ such that  
\begin{enumerate}[(i)]
\item $\widetilde{\ket{0}\otimes \ket{0}}$ does not occur as a summand
in any vector in $\mathcal{C}$.

\item For $1 \leq g \leq \nu -2 , ~\widetilde{\ket{g}
\otimes \ket{g}}$ occurs as a summand in two members of
$\mathcal{C}$.

\item $(\widetilde{\ket{\nu-1}\otimes \ket{\nu-1}})$ occurs
as a summand in two members of $\mathcal{C}$ except for the
bipartite case with $2=\nu  < \nu'$ or $\nu'= \nu +1$, when it occurs only once. 

\item For $2 \leq g \leq \nu -1,~\widetilde{(\ket{g-1}\otimes
\ket{g})}$ and $\widetilde{(\ket{g}\otimes
\ket{g-1})}$  occur as a summand in (the same) two members
of $\mathcal{C}$.

\item In particular,
$\widetilde{(\ket{0}\otimes\ket{1})}$,
$\widetilde{(\ket{1}\otimes\ket{0})}$ and $\widetilde{(\ket{1}\otimes\ket{1})}$, occur as summands as
follows.

\begin{enumerate}
\item Vectors $\widetilde{\ket{0}\otimes\ket{1}}$ and
$\widetilde{\ket{1}\otimes\ket{0}}$ occur as a summand in
$\widetilde{\ket{a_{0,1}}} = \frac{1}{\sqrt{2}} (
\widetilde{\ket{0}\otimes\ket{1}} -
\widetilde{\ket{1}\otimes\ket{0}})$, and in case $k\geq 3$,
also in $\ket{c_0^1}
=\frac{1}{\sqrt{2|\mathcal{I}_1| (|\mathcal{I}_1|-2)}}
\left(  (|\mathcal{I}_1|-2)
(\widetilde{\ket{0}\otimes\ket{1}} +
\widetilde{\ket{1}\otimes\ket{0}}) -2 \ket{v_1}\right) =
\frac{1}{\sqrt{2k (k-2)}}
\left(  (k-2)
(\widetilde{\ket{0}\otimes\ket{1}} +
\widetilde{\ket{1}\otimes\ket{0}}) -2 \ket{v_1}\right)$. 

\item For $\nu=2$,  $\widetilde{(\ket{1}\otimes\ket{1})}$
occurs as a summand as follows.
\begin{itemize}
\item For $k=2,~\nu'\geq 3$, in
$\frac{1}{\sqrt{2}}\left(\widetilde{\ket{1}\otimes\ket{1}} -
\widetilde{\ket{0}\otimes\ket{2}}\right)$ or  in
$\frac{1}{\sqrt{2}}\left(\widetilde{\ket{1}\otimes\ket{1}} -
\widetilde{\ket{2}\otimes\ket{0}}\right)$ according as $d_2=\nu'$
or $d_1=\nu'$. In fact, it is the same as
$\ket{a_{^2\mathbf{i}^0, ^2\mathbf{i}^1}}=
\frac{1}{\sqrt{2}} (\ket{^2\mathbf{i}^0}- \ket{
^2\mathbf{i}^1})$.
\item For $k\geq 3$, in
$\ket{a_{^2\mathbf{i}^0,^2\mathbf{i}^1}} =
\frac{1}{\sqrt{2}} ( \ket{^2\mathbf{i}^0} -
\ket{^2\mathbf{i}^1})$ and in $\ket{c_0^2} =
\frac{(|\mathcal{I}_2|-2)\left(\ket{^2\mathbf{i}^0} +
\ket{^2\mathbf{i}^1}\right)-2\ket{v_2}}{\sqrt{2(|\mathcal{I}_2|-2)|\mathcal{I}_2|}}$
\end{itemize}
\item For $\nu\geq 3, ~ \widetilde{\ket{1}\otimes\ket{1}}$
occurs as a summand in $\tilde{\ket{b_0^2}}$, and if, in
addition, $k\geq 3$, also in
$\ket{c_0^2} = \frac{|\mathcal{I}_2^2|\ket{\eta_2}
-|\mathcal{I}_2^1|\ket{v_2}}{\sqrt{|\mathcal{I}_2^1|
|\mathcal{I}_2^2| |\mathcal{I}_2|}}$.

\end{enumerate}
\end{enumerate}
\end{theorem}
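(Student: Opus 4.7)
The plan is to build $\mathcal{C}$ from the orthogonal decomposition $\mathcal{S} = \bigoplus_{n=1}^{N-1} \mathcal{S}^{(n)}$ of Subsection~\ref{subsc:bhat}, producing an orthonormal basis $\mathcal{C}_n$ of each $\mathcal{S}^{(n)}$ and concatenating them to form $\mathcal{C}$. For every such $n$ we apply Theorem~\ref{th:2nd} to $(Y,Z) = (\mathcal{H}^{(n)}, \mathcal{S}^{(n)})$ with $\ket{y_s} = \ket{\mathbf{i}^s}$ taken from the enumeration fixed in Subsection~\ref{ssu:or} and $d = |\mathcal{I}_n|$. By that enumeration, $\ket{y_0} = \widetilde{\ket{g}\otimes\ket{g}}$ when $n = 2g$ with $g \leq \nu-1$, while $\ket{y_0} = \widetilde{\ket{g-1}\otimes\ket{g}}$ and $\ket{y_1} = \widetilde{\ket{g}\otimes\ket{g-1}}$ when $n = 2g-1$ with $g \leq \nu-1$.

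The recipe within Theorem~\ref{th:2nd} is chosen according to the parity of $n$ and the size of $\mathcal{I}_n^1$. For $n = 2g$ even with $1 \leq g \leq \nu-2$, the tilde-embedding $\widetilde{\mathcal{B}_n}$ of Parthasarathy's bipartite basis (Subsection~\ref{ssuc:basiskrp}) furnishes an orthonormal basis $\mathcal{C}_r^1$ of $Z_r^1$ with $r = |\mathcal{I}_n^1| - 1$; Remark~\ref{rem:pack}(iii) guarantees that $\widetilde{\ket{g}\otimes\ket{g}}$ is a summand of $\widetilde{\ket{b_0^{2g}}}$ alone, so the single-summand hypothesis of Theorem~\ref{th:2nd}(iii) holds with $\ket{z_0} = \widetilde{\ket{b_0^{2g}}}$, and that theorem appends $\ket{c_0^{2g}} = \ket{z_r}$ together with Fourier vectors that are devoid of $\ket{y_0}$. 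The boundary cases $n = 1$ and $n = 2\nu - 3$ are analogous with $\widetilde{\mathcal{B}_n}$ a singleton. For odd $n = 2g - 1$ with $2 \leq g \leq \nu - 2$, Parthasarathy's basis would place $\widetilde{\ket{g-1}\otimes\ket{g}}$ in several members and thus violate the hypothesis of Theorem~\ref{th:2nd}(iii); we therefore apply Theorem~\ref{th:2nd}(ii) directly to $(Y, Z)$, obtaining $\ket{z_0} = \widetilde{\ket{a_{g-1, g}}}$ and a $\ket{z_1}$ symmetric in $\ket{y_0}, \ket{y_1}$. Finally, for $n = 2\nu - 2$ and $n \geq 2\nu - 1$ we have $\mathcal{I}_n^1 = \emptyset$, and we apply Theorem~\ref{th:2nd}(i) or (ii) directly to $(Y, Z)$ according to whether $d = 2$ or $d \geq 3$.

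Items (i)--(v) are then verified by tracking summands. (i) is immediate from $\widetilde{\ket{0}\otimes\ket{0}} \in \mathcal{H}^{(0)} \perp \bigoplus_{n \geq 1}\mathcal{S}^{(n)}$. (ii) follows from the even-level recipe: for $1 \leq g \leq \nu - 2$, $\widetilde{\ket{g}\otimes\ket{g}}$ is a summand only of $\widetilde{\ket{b_0^{2g}}}$ and $\ket{c_0^{2g}}$. (iii) reduces to computing $d = |\mathcal{I}_{2\nu - 2}|$ from the polynomial formula recalled in Subsection~2.8; one finds $d = 2$ precisely when $k = 2$ with either $\nu = 2 < \nu'$ or $\nu' = \nu + 1$, yielding the one-member exception via Theorem~\ref{th:2nd}(i), while $d \geq 3$ in all other cases yields two members via Theorem~\ref{th:2nd}(ii). (iv) is the direct output of Theorem~\ref{th:2nd}(ii) at odd $n = 2g - 1$, since both $\ket{y_0}$ and $\ket{y_1}$ occur in $\ket{z_0}, \ket{z_1}$ and in no other $\ket{z_s}$, and each lies only in $\mathcal{H}^{(2g-1)}$. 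The explicit formulas in (v)(a)--(c) drop out by substituting the enumeration together with the values of $|\mathcal{I}_n|, |\mathcal{I}_n^1|, |\mathcal{I}_n^2|$ into the expressions for $\ket{\eta}, \ket{v}$ and $\ket{z_r}$ supplied by Theorem~\ref{th:2nd}.

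The main technical hurdle is the case split just described: Remark~\ref{rem:pack}(iii) provides the single-summand hypothesis of Theorem~\ref{th:2nd}(iii) exactly at even levels and fails at odd ones, which compels the switch to a direct application in the odd case; one must then confirm that the two recipes dovetail to give exactly the two-member counts asserted in (ii) and (iv). A secondary delicacy is the $d = 2$ boundary at $n = 2\nu - 2$, which necessitates a separate appeal to Theorem~\ref{th:2nd}(i) and underlies both the exception in (iii) and the one-dimensional form in (v)(b) for $k = 2$ with $\nu = 2 < \nu'$.
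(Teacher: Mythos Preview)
Your proposal is correct and follows the paper's own approach: the paper's ``proof'' is just the sentence ``This discussion combined with Theorem~\ref{th:2nd} above immediately gives us the following theorem,'' and you have carefully unpacked that sentence, applying Theorem~\ref{th:2nd} level by level with the enumeration of Subsection~\ref{ssu:or}. Your observation that at odd levels $n=2g-1$ with $2\le g\le \nu-2$ the full basis $\widetilde{\mathcal{B}_n}$ cannot serve as $\mathcal{C}_r^1$ in Theorem~\ref{th:2nd}(iii) (because $\widetilde{\ket{g-1}\otimes\ket{g}}$ appears in several of its members), and that one must instead invoke Theorem~\ref{th:2nd}(ii) directly, is a genuine clarification of a point the paper leaves implicit; it is not a different method but rather the correct way to read the paper's prescription in that case.
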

%%%%%%%%%%%%%%%%%%%%%%%%%%%%%%%%%%%%%%%%%%%%%iiiiii%
\section{Entanglement properties of the projection operators} 
\label{projections}
We begin this section with some preparatory remarks,
which will be used to arrive at our main results.

\subsection{A useful involution on $\mathcal{I} \times
\mathcal{I}$.} \label{subsc:3.1}
\par Let $\mathcal{H}=\bigotimes_{t=1}^k \mathcal{H}_t$. Fix
$j$, with $1\leq j \leq k$. 

\par For $(\mathbf{p},\mathbf{q}) \in \mathcal{I} \times
\mathcal{I}$, let $\sigma_j (\mathbf{p},\mathbf{q}) =
(\mathbf{p'},\mathbf{q'})$, where 
\begin{equation*}
p_t' = \left\{\begin{array}{ll}
p_t & \text{for } t \neq j\\
q_j & \text{for } t = j
\end{array}\right.
\qquad\text{and}\qquad
q_t' = \left\{\begin{array}{ll}
q_t & \text{for } t \neq j\\
p_j & \text{for } t = j
\end{array}\right.
\end{equation*}
Then 
\begin{equation}\label{eq:pt}
|\mathbf{p}\rangle \langle \mathbf{q}|^{PT_j} =
|\mathbf{p'}\rangle \langle \mathbf{q'}|.
\end{equation}
We note that  $\sigma_j (\mathbf{q},\mathbf{p}) =
(\mathbf{q'},\mathbf{p'})$. Further, the map $\sigma_j \circ
\sigma_j$ is the identity map on $\mathcal{I} \times \mathcal{I}$,  i.e.,
the map $\sigma_j$ is an involution on $\mathcal{I} \times \mathcal{I}$.

\subsection{Action of $PT_j$. }\label{subsc:3.2}
\par Any operator $\rho \in \mathcal{L}(\mathcal{H}_1
\otimes \cdots \otimes \mathcal{H}_k)$ given as in 
(\ref{eq:pt1}) can be written in the compact form as,  
\begin{equation}\label{eq:state}
\rho = \sum_{ \mathbf{p},\mathbf{q} \in \mathcal{I}}
\rho_{(\mathbf{p},\mathbf{q})} |\mathbf{p}\rangle \langle
\mathbf{q}|,
\end{equation}
then
\begin{eqnarray*}
\rho^{PT_j} &=& \sum_{ (\mathbf{p},\mathbf{q}) \in \mathcal{I}
\times \mathcal{I}}
\rho_{(\mathbf{p},\mathbf{q})} |\mathbf{p'}\rangle \langle
\mathbf{q'}|\\
&=& \sum_{ (\mathbf{p},\mathbf{q}) \in \mathcal{I}
\times \mathcal{I}}
\rho_{\sigma_j (\mathbf{p},\mathbf{q})} |\mathbf{p}\rangle \langle
\mathbf{q}|.
\end{eqnarray*}

\par Fix $j'\neq j$ with $1 \leq j' \leq k$. Let
$\mathbf{p}^0 \in \mathcal{I}_0,~ \mathbf{q}^0 \in
\mathcal{I}_2$; $ \mathbf{p}^1$ and $\mathbf{q}^1 \in
\mathcal{I}_1$, be defined as

\[ p_t^0=0 \quad  \text{for all } t\qquad , \qquad 
q_t^0 = \left\{ \begin{array}{ll}
1 & \text{for  } t = j, j'\\
0 & \text{otherwise}
\end{array} \right.\] 
\[ p_t^1 = \left\{ \begin{array}{ll}
1 & \text{for  } t =j \\
0 & \text{otherwise}
\end{array} \right. \qquad , \qquad 
q_t^1 = \left\{ \begin{array}{ll}
1 & \text{for  } t= j'\\
0 & \text{otherwise}
\end{array} \right.\] 
Then $\sigma_j(\mathbf{p}^0, \mathbf{q}^0)
=(\mathbf{p}^1,\mathbf{q}^1)$. 
\par Let $\lambda
\neq 0$ be a real number. Set  $\ket{\xi}
=  \lambda \ket{\mathbf{p}^0} +\ket{\mathbf{q}^0}$. Then for any
$\mathbf{p,~q} \in \mathcal{I}$,
\begin{eqnarray*}
\langle \xi | \mathbf{p}\rangle \langle \mathbf{q} | \xi
\rangle &=&
(\lambda \delta_{\mathbf{p}^0\mathbf{p}}+\delta_{\mathbf{q}^0\mathbf{p}})
(\lambda \delta_{\mathbf{p}^0\mathbf{q}}+\delta_{\mathbf{q}^0\mathbf{q}})\\
&=& \left\{
\begin{array}{lll}
\lambda^2 & \text{for~} &(\mathbf{p},\mathbf{q})=
(\mathbf{p^0},\mathbf{p^0}), \\
\lambda  & \text{for~} &(\mathbf{p},\mathbf{q})\in
\{(\mathbf{p^0},\mathbf{q^0}),
(\mathbf{q^0},\mathbf{p^0})\}\\
1 & \text{for~} &(\mathbf{p},\mathbf{q})=
 (\mathbf{q^0},\mathbf{q^0})\\
0 && \text{otherwise} 
\end{array}\right.
\end{eqnarray*}
With a state $\rho$ as in (\ref{eq:state}), 
\begin{eqnarray}
\langle \xi|\rho^{PT_j}|\xi \rangle 
&=& \sum_{(\mathbf{p,q})\in \mathcal{I}\times \mathcal{I}}
\rho_{\sigma_j(\mathbf{p,q})} \langle \xi |
\mathbf{p}\rangle \langle \mathbf{q} | \xi \rangle \nonumber\\
&=& \lambda^2 \rho_{\sigma_j (\mathbf{p^0,p^0})}+ \lambda \rho_{\sigma_j
(\mathbf{p^0,q^0})} + \lambda \rho_{\sigma_j (\mathbf{q^0,p^0})} +
\rho_{\sigma_j (\mathbf{q^0,q^0})} \nonumber\\
&=& \lambda^2 \rho_{(\mathbf{p^0,p^0})}+ \lambda(\rho_{
(\mathbf{p^1,q^1})} + \rho_{(\mathbf{q^1,p^1})}) +
\rho_{(\mathbf{q^0,q^0})}.\nonumber
\end{eqnarray}
%----------------------------------------------------

\begin{theorem}\label{th:main}
Let $\mathcal{H}=\bigotimes_{r=1}^k \mathcal{H}_r$.
 Let $P_\mathcal{S}$ be the projection  on
the completely entangled subspace $\mathcal{S}$. 
 For each $j$, $P_\mathcal{S}$ is not positive under
partial transpose at level $j$.

In particular, $P_\mathcal{S}$ is NPT.
\end{theorem}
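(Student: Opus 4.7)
The plan is to exhibit an explicit witness vector $\ket{\xi}$ with $\langle \xi | P_\mathcal{S}^{PT_j} | \xi \rangle < 0$. Taking any $j' \neq j$ (which exists since $k \geq 2$), I adopt the construction of \S\ref{subsc:3.2}: let $\mathbf{p}^0 = (0, \ldots, 0) \in \mathcal{I}_0$ and let $\mathbf{q}^0 \in \mathcal{I}_2$ have entry $1$ at positions $j, j'$ and $0$ elsewhere, then set $\ket{\xi} = \lambda \ket{\mathbf{p}^0} + \ket{\mathbf{q}^0}$ for a real scalar $\lambda$ to be chosen. The computation already presented in \S\ref{subsc:3.2} gives
\begin{equation*}
\langle \xi | P_\mathcal{S}^{PT_j} | \xi \rangle \;=\; \lambda^2 (P_\mathcal{S})_{(\mathbf{p}^0,\mathbf{p}^0)} \;+\; \lambda \bigl[(P_\mathcal{S})_{(\mathbf{p}^1,\mathbf{q}^1)} + (P_\mathcal{S})_{(\mathbf{q}^1,\mathbf{p}^1)}\bigr] \;+\; (P_\mathcal{S})_{(\mathbf{q}^0,\mathbf{q}^0)},
\end{equation*}
where $\mathbf{p}^1, \mathbf{q}^1 \in \mathcal{I}_1$ are as defined there; the problem reduces to making this quadratic strictly negative.

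The next step is to evaluate the three coefficients using an orthonormal basis $\mathcal{C}$ of $\mathcal{S}$, via $(P_\mathcal{S})_{(\mathbf{p},\mathbf{q})} = \sum_{v \in \mathcal{C}} \langle \mathbf{p} | v \rangle \langle v | \mathbf{q} \rangle$. I use the basis supplied by Theorem~\ref{th:3rd} in the generic case $(k-2) + (\nu' - \nu) > 0$, and Parthasarathy's basis from \S\ref{ssuc:basiskrp} in the remaining bipartite equal-dimensional case. In both settings $\ket{\mathbf{p}^0} = \widetilde{\ket{0} \otimes \ket{0}}$ never appears as a summand of any $v \in \mathcal{C}$---this is Theorem~\ref{th:3rd}(i), while in the Parthasarathy case it follows by direct inspection since every $\ket{a_{x,y}}$ has $x < y$ and every $\ket{b_p^n}$ lies in $\mathcal{S}^{(n)}$ with $n \geq 2$. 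Consequently $(P_\mathcal{S})_{(\mathbf{p}^0,\mathbf{p}^0)} = 0$ and the quadratic collapses to an affine function of $\lambda$.

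Theorem~\ref{th:3rd}(v)(a) identifies precisely the basis vectors containing both $\widetilde{\ket{0}\otimes\ket{1}}$ and $\widetilde{\ket{1}\otimes\ket{0}}$ as summands: $\widetilde{\ket{a_{0,1}}}$ always, and $\ket{c_0^1}$ when $k \geq 3$ (in the Parthasarathy case only $\ket{a_{0,1}}$ lies in $\mathcal{S}^{(1)}$). Short computations give the contribution $-1/2$ from $\widetilde{\ket{a_{0,1}}}$ and $(k-2)/(2k)$ from $\ket{c_0^1}$, whence
\begin{equation*}
(P_\mathcal{S})_{(\mathbf{p}^1,\mathbf{q}^1)} + (P_\mathcal{S})_{(\mathbf{q}^1,\mathbf{p}^1)} \;=\; \begin{cases} -1 & \text{if } k=2,\\ -2/k & \text{if } k\geq 3, \end{cases}
\end{equation*}
which is strictly negative; meanwhile $(P_\mathcal{S})_{(\mathbf{q}^0,\mathbf{q}^0)} \geq 0$. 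Therefore $\langle \xi | P_\mathcal{S}^{PT_j} | \xi \rangle$ is affine in $\lambda$ with strictly negative slope, and taking $\lambda > 0$ large enough yields a negative value, proving $P_\mathcal{S}$ is NPT$_j$ for each $j$; NPT then follows from Remark~\ref{rem:cut}. The main obstacle I anticipate is bookkeeping: confirming that no basis vector of $\mathcal{S}^{(1)}$ beyond those flagged in Theorem~\ref{th:3rd}(v)(a) has nonzero inner product with both $\ket{\mathbf{p}^1}$ and $\ket{\mathbf{q}^1}$, which rests on the construction in \S\ref{ssu:or} deliberately isolating $\widetilde{\ket{0}\otimes\ket{1}}$ and $\widetilde{\ket{1}\otimes\ket{0}}$ into exactly those two basis vectors.
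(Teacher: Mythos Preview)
Your proposal is correct and follows essentially the same approach as the paper's proof: the same witness $\ket{\xi}=\lambda\ket{\mathbf{p}^0}+\ket{\mathbf{q}^0}$, the same case split on the orthonormal basis (Parthasarathy's $\mathcal{B}$ for $k=2,\ d_1=d_2$; the basis of Theorem~\ref{th:3rd} otherwise), the same vanishing of the $\lambda^2$ coefficient, and the same identification of $\widetilde{\ket{a_{0,1}}}$ and $\ket{c_0^1}$ as the only contributors to the linear term, yielding $-1/k$ for each of $(P_\mathcal{S})_{(\mathbf{p}^1,\mathbf{q}^1)}$ and $(P_\mathcal{S})_{(\mathbf{q}^1,\mathbf{p}^1)}$. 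Your closing concern about bookkeeping is exactly what the paper handles by appealing to Theorem~\ref{th:3rd}(v) and the construction in \S\ref{ssu:or}, which guarantees that within $\mathcal{S}^{(1)}$ only $\widetilde{\ket{a_{0,1}}}$ and (for $k\geq 3$) $\ket{c_0^1}$ carry both $\ket{\mathbf{p}^1}$ and $\ket{\mathbf{q}^1}$.
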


\begin{proof}

\par For a unit vector $\ket{\zeta} \in \mathcal{H}$, let
$P_\zeta$ be the projection on $\ket{\zeta}$, i.e. $P_\zeta
=|\zeta\rangle \langle \zeta|$. Let $1 \leq j \leq k$. Take
any $j' \neq j$ with $1 \leq j' \leq k$.  Let $\mathcal{C}$ be an
orthonormal  basis for $\mathcal{S}$ 
in two separate cases as follows.
\begin{itemize}
\item[(a)] For $k=2,~ d_1 =d_2 = \nu$, take $\mathcal{C}=
\mathcal{B}$ as in \S\ref{ssuc:basiskrp}.
\item[(b)] For $k=2$ but $d_1 \neq d_2$, or $k\geq 3$ we follow
the procedure set up in \S \ref{ssu:or} for Theorem~\ref{th:3rd}.
 Then 
\[P_\mathcal{S} = \sum_{\ket{\zeta}\in \mathcal{C}} P_\zeta
=\sum_{\mathbf{p,q} \in \mathcal{I}} \rho_{(\mathbf{p,q})}
|\mathbf{p}\rangle\langle \mathbf{q}|,\]
for some suitable  $\rho_{(\mathbf{p,q})}$'s.
In the notation  \S\ref{subsc:3.2}, 
\begin{equation}\label{eq:9}
\langle \xi|\rho^{PT_j}|\xi \rangle =  \lambda^2 \rho_{(\mathbf{p^0,p^0})}+ \lambda(\rho_{
(\mathbf{p^1,q^1})} + \rho_{(\mathbf{q^1,p^1})}) +
\rho_{(\mathbf{q^0,q^0})}.
\end{equation}
To complete the proof it is enough to show that $\langle
\xi|\rho^{PT_j}|\xi \rangle<0$.

\par We arrange the elements of $\mathcal{C}$ in any manner
$\{\ket{\zeta_s}: 0\leq s  \leq M-1\}$, but insisting on the
following points.
\item[(c)] \[ \ket{\zeta_0} =\left\{
\begin{array}{lll}
\ket{a_{0,1}} && \text{ in case (a)}\\
&&\\
 \widetilde{\ket{a_{0,1}}} && \text{ in case (b)}.
\end{array}\right. \]
\item[(d)] \[ {\rm For~}  \nu = 2, \quad\quad
\ket{\zeta_1} = \ket{a_{^2\mathbf{i}^0, ^2\mathbf{i}^1}}
=\frac{1}{\sqrt{2}}(\ket{^2\mathbf{i}^0} -
\ket{^2\mathbf{i}^1}),\]
\[{\rm 
whereas~for~} \nu \geq 3, \quad \quad
\ket{\zeta_1} =\widetilde{\ket{b_0^2}}.\]
\item[(e)] \[{\rm For~} k\geq 3, \quad \quad
\ket{\zeta_2} = \ket{c_0^1}.\]
\item[(f)] \[{\rm For~} k \geq 3, \quad \quad
\ket{\zeta_3} = \ket{c_0^2}.\]
\end{itemize}
We write $P_s = P_{\ket{\zeta_s}}, ~ 0\leq s \leq M-1$. Then
$P_\mathcal{S}=\sum_{s=0}^{M-1}P_s$. So $\rho_{\mathbf{p,q}}
\neq 0$ only if $\ket{\mathbf{p}}$ and $\ket{\mathbf{q}}$ occur as a
summand in some $\ket{\zeta_s}$. 

\par In view of  Theorem~\ref{th:3rd}(iv) and
(\ref{eq:9}) above we can just confine our attention to the
vectors listed under (c), (d), (e) and (f) above.  

\par We first note that none of them contributes towards
$\rho_{\mathbf{(p^0,p^0)}}$. Also
$\rho_{\mathbf{(q^0,q^0)}}\geq 0$.  Next, we find that contribution
to $\rho_{(\mathbf{p^1,q^1})}$ is the same as that to
$\rho_{(\mathbf{q^1,p^1})}$. Thus, if the final contribution to
$\rho_{(\mathbf{p^1,q^1})}$ is $<0$, then for a suitable
$\lambda >0,~ \langle \xi| \rho^{PT_j} | \xi \rangle
<0$. We now proceed to show that it is so. 
 
\par $P_0$ contributes $-\frac{1}{2}$ to
$\rho_{\mathbf{p^1,q^1)}}$. For $k\geq
3$, $P_2$ contributes $\frac{1}{2}\frac{k-2}{k}$ to $\rho_{\mathbf{p^1,q^1)}}$.  So the total contribution to
$\rho_{(\mathbf{p^1,q^1})}$ is $-\frac{1}{k}$. Hence the
proof.
\end{proof}

\begin{corollary}
$\mathcal{F}$ does not contain any unextendable
orthonormal product basis. 
\end{corollary}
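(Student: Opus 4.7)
The plan is to argue by contradiction, combining the Bennett et al.\ construction (Theorem~A) with our main entanglement result (Theorem~\ref{th:main}). Assume for contradiction that $\mathcal{B}=\{\ket{\psi_s}:1\leq s\leq d\}$ is a mutually orthonormal UPB with every $\ket{\psi_s}\in\mathcal{F}$.

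The first, and main, step is to show that such a $\mathcal{B}$ is forced to be an orthonormal basis of the whole space $\mathcal{F}$. Let $\mathcal{F}'$ denote the orthogonal complement of $\langle\mathcal{B}\rangle$ inside $\mathcal{F}$, so that $\langle\mathcal{B}\rangle^{\perp}=\mathcal{S}\oplus\mathcal{F}'$ as a subspace of $\mathcal{H}$. The unextendability of $\mathcal{B}$ says $\langle\mathcal{B}\rangle^{\perp}$ contains no nonzero product vector; hence $\mathcal{S}\oplus\mathcal{F}'$ is a completely entangled subspace of $\mathcal{H}$. Wallach's upper bound (Theorem~B) then gives
\[
M+\dim\mathcal{F}'=\dim(\mathcal{S}\oplus\mathcal{F}')\leq M,
\]
forcing $\mathcal{F}'=\{0\}$ and $d=\dim\mathcal{F}=D-M$.

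With $\mathcal{B}$ now an orthonormal basis of $\mathcal{F}$, one has $\sum_{s=1}^{d}\ket{\psi_s}\bra{\psi_s}=P_\mathcal{F}=I_D-P_\mathcal{S}$, and Theorem~A yields
\[
\rho=\frac{1}{D-d}\left(I_D-\sum_{s=1}^{d}\ket{\psi_s}\bra{\psi_s}\right)=\frac{1}{M}\,P_\mathcal{S}
\]
as an entangled PPT state. By the definition of PPT together with Remark~\ref{rem:cut}, $\rho$ is PPT${}_j$ for every $j$, and therefore so is $P_\mathcal{S}$. But Theorem~\ref{th:main} asserts that $P_\mathcal{S}$ is NPT${}_j$ for every $j$, and we have our contradiction.

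The only non-routine step is the reduction of an arbitrary UPB contained in $\mathcal{F}$ to a full orthonormal basis of $\mathcal{F}$, which is precisely where Wallach's maximum-dimension bound is invoked. After this reduction, Theorem~A turns the hypothetical UPB into a positive scalar multiple of $P_\mathcal{S}$ that is PPT, and the clash with Theorem~\ref{th:main} is immediate.
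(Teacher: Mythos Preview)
Your proof is correct and follows the same overall strategy as the paper: assume a UPB sits inside $\mathcal{F}$, invoke Theorem~A to get a PPT state, and contradict Theorem~\ref{th:main}. The paper's proof is a terse two-liner that jumps directly from ``$\mathcal{F}$ contains a UPB'' to ``$P_{\mathcal{S}}$ is PPT,'' implicitly taking for granted that such a UPB must span all of $\mathcal{F}$; you make this step explicit and justify it cleanly via Wallach's bound (Theorem~B), which is a genuine improvement in rigor over what the paper writes down.
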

\begin{proof}
If $\mathcal{F}$ contains any unextandable product basis
then by Theorem~(\ref{th:bbd}) $P_{\mathcal{S}}$
will be PPT which is not true by Theorem~\ref{th:main}.  Hence
the result follows. 
\end{proof}

We now show that large classes of states with range in the completely
entangled subspace $\mathcal{S}$ are NPT.

\begin{theorem}\label{th:3.2}
Let $1 \leq j \leq k$. Take any $j' \neq j$ with $1 \leq j'
\leq k$. Any positive operator $\sum_{s=0}^{M-1} p_s P_s$, where
$ p_s \geq 0$ for all $s,~ p_0+(k-2)p_2 >0$ and $P_s$'s
are as in the proof of Theorem~\ref{th:main} above, is not
positive under partial transpose at level $j$.
\end{theorem}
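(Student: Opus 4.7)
My plan is to replay the argument used for Theorem \ref{th:main}, tracking how the weights $p_s$ enter each coefficient. I keep the same orthonormal basis $\mathcal{C} = \{\ket{\zeta_s}\}_{s=0}^{M-1}$ arranged as in (c), (d), (e), (f) of that proof, and I probe the positivity of $\rho^{PT_j}$ with the same family of test vectors $\ket{\xi} = \lambda \ket{\mathbf{p^0}} + \ket{\mathbf{q^0}}$, where $\lambda \in \mathbb{R}$ is to be chosen at the end. Writing $\rho = \sum_s p_s \ket{\zeta_s}\bra{\zeta_s}$ and applying the formula from \S\ref{subsc:3.2}, I again obtain
\[
\langle \xi | \rho^{PT_j} | \xi \rangle = \lambda^2 \rho_{(\mathbf{p^0}, \mathbf{p^0})} + \lambda\bigl(\rho_{(\mathbf{p^1}, \mathbf{q^1})} + \rho_{(\mathbf{q^1}, \mathbf{p^1})}\bigr) + \rho_{(\mathbf{q^0}, \mathbf{q^0})},
\]
where each coefficient $\rho_{(\mathbf{p}, \mathbf{q})}$ is now the corresponding entry of the weighted operator.

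Next I analyze the three pieces using Theorem \ref{th:3rd}. The quadratic term vanishes because $\widetilde{\ket{0}\otimes \ket{0}} = \ket{\mathbf{p^0}}$ never appears as a summand in any $\ket{\zeta_s}$, by part (i) of that theorem. The intercept $\rho_{(\mathbf{q^0}, \mathbf{q^0})}$ is non-negative, being a diagonal entry of the positive operator $\rho$. For the slope, part (v)(a) pinpoints the only two basis vectors in which both $\ket{\mathbf{p^1}}$ and $\ket{\mathbf{q^1}}$ occur as summands: $\ket{\zeta_0} = \widetilde{\ket{a_{0,1}}}$ contributes $-p_0$, and for $k \geq 3$, $\ket{\zeta_2} = \ket{c_0^1}$ contributes $\tfrac{k-2}{k} p_2$, giving
\[
\rho_{(\mathbf{p^1}, \mathbf{q^1})} + \rho_{(\mathbf{q^1}, \mathbf{p^1})} = \frac{-k p_0 + (k-2) p_2}{k}.
\]

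Hence $\langle \xi | \rho^{PT_j} | \xi \rangle$ is an affine function of $\lambda$ with a non-negative intercept and the slope just computed. The hypothesis $p_0 + (k-2)p_2 > 0$ guarantees that this slope does not vanish, so choosing $\lambda$ of sign opposite to the slope and of sufficiently large magnitude makes $\langle \xi | \rho^{PT_j} | \xi \rangle$ strictly negative, as required for NPT${}_j$. The main obstacle is the localization step: one must check that no $\ket{\zeta_s}$ other than $\ket{\zeta_0}$ and (when $k \geq 3$) $\ket{\zeta_2}$ contributes to $\rho_{(\mathbf{p^1}, \mathbf{q^1})}$. This is exactly what parts (iv) and (v)(a) of Theorem \ref{th:3rd} were designed to deliver, after which the entire proof reduces to a weighted replay of the computation already carried out for $P_\mathcal{S}$.
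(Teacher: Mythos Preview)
Your argument has a genuine gap. You compute the slope correctly as
\[
\rho_{(\mathbf{p^1},\mathbf{q^1})}+\rho_{(\mathbf{q^1},\mathbf{p^1})}=\frac{-k\,p_0+(k-2)\,p_2}{k},
\]
but then you assert that the hypothesis $p_0+(k-2)p_2>0$ forces this slope to be nonzero. That implication is false: the two linear forms $-k\,p_0+(k-2)p_2$ and $p_0+(k-2)p_2$ are independent for $k\ge 3$, and the slope vanishes precisely when $k\,p_0=(k-2)\,p_2$, which is perfectly compatible with $p_0+(k-2)p_2>0$ (take, e.g., $k=3$, $p_0=1$, $p_2=3$). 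In that case your affine function of $\lambda$ degenerates to a nonnegative constant and the test vector $\ket{\xi}=\lambda\ket{\mathbf{p^0}}+\ket{\mathbf{q^0}}$ yields no information.

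The paper handles exactly this exceptional case as a separate step: when $k\ge 3$ and $k\,p_0=(k-2)\,p_2$, the hypothesis forces $p_2>0$, and one picks a third index $j''\neq j,j'$ and replaces $\ket{\mathbf{q^0}}$ by $\ket{\mathbf{r^0}}$ (with $1$'s in positions $j$ and $j''$). The relevant off-diagonal entry is then the $(\mathbf{p^1},\mathbf{r^1})$ coefficient, to which only $\ket{\zeta_2}=\ket{c_0^1}$ contributes (and not $\ket{\zeta_0}$, since $\ket{\mathbf{r^1}}$ sits in $\ket{v_1}$). The new slope is $w'=-\tfrac{2}{k}\,p_2<0$, and the argument goes through. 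Without this second probe your proof does not cover all admissible $(p_0,p_2)$.
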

\begin{proof} 
(i) All cases except possibly the case when
$k \geq 3$ and  $(k-2)p_2=k p_0$. 

\par We refer to the proof of Theorem~\ref{th:main} above. 
The only change needed is that
the term, say $w$ with $\lambda$ is now given as follows.

\begin{enumerate}[(a)]
\item For $k=2,~ w = - p_0$ (in place of $-1$),

\item In case $k\geq3,~ w= -p_0  +p_2 \frac{ k-2} {k } \neq
0$.
\end{enumerate}
So the final number in the right hand side of (\ref{eq:9})
can be made negative by suitable choice of $\lambda$ which
has to be suitably big and  $> 0$  if $w <  0$, and   has to be
$< 0$ and suitably big in absolute value  if $w >
0$.

(ii) Case $k \geq 3$ but $(k-2) p_2 = k
p_0$. Since $p_0 + (k-2)p_2 > 0$, we have $p_2 >0$. Because
$k\geq 3$, there is $j''$ with $j \neq j''
\neq j'$ and $1 \leq j'' \leq k$. Let $\mathbf{r}^0 \in
\mathcal{I}_2$ and $\mathbf{r}^1 \in  \mathcal{I}_1$ be given
by 
\[r_t^0 = \left\{\begin{array}{lll}
1 & \text { if } &t= j, ~ j'' \\
0 && \text{otherwise}
\end{array} \right. \] 
\[r_t^1 = \left\{ \begin{array}{lll}
1 & \text{ for } & t= j'' \\
0 && \text{ otherwise}.
\end{array}\right.\]
  We replace $\xi$ by $\xi'$ given by $\lambda \ket{
\mathbf{p}^0} + \ket { \mathbf{r}^0}$ with $\lambda$ real
and make computations similar to those in item 3.2 and proof of  part
(i) above. We note that $\mathbf{q}^1$ has to be replaced by
$\mathbf{r}^1$, and then $w$ by $w'= -\frac{2}{k} p_2$. And,
therefore, for $\lambda$ suitably bigger than
$0$, $\bra{ \xi'} \rho^{ PT_j} \ket{ \xi'} <0$.  This completes the proof.
\end{proof}
\begin{rem}
\noindent
\begin{enumerate}[(i)]
\item Because of the freedom of orthonormal bases at various
stages of the construction of $\mathcal{C}$ the import
of Theorem \ref{th:3.2} is much more. In fact, we may apply 
Theorem \ref{th:2nd} to construct a basis $\mathcal{D}$ for
$\mathcal{S}$ with more such freedom by clubbing in
$\mathcal{S}^{(n)}$'s,  $3\leq n \leq N-1$ and insisting on
including $\ket{\zeta_0},~\ket{\zeta_1}$, and in case $k\geq
3$, $\ket{\zeta_2}$ and $\ket{\zeta_3}$ as well. 

\item  Let  $1 \leq r \leq k$. Let  $\gamma_r$ be
the involution on the set $\mathcal{D}_r = \{p :0 \leq p
\leq d_r - 1\}$ to itself that takes $p \mapsto d_r -1 - p$ for
$0 \leq p \leq  d_r - 1$. This induces a unitary
linear operator $R_r$ on $\mathcal{H}_r$ to itself which takes $e_p$ to
$e_{\gamma_r (p)}$ for  $p\in \mathcal{D}_r$. We note that
$R_r^2=I_{\mathcal{H}_r}$ and therefore, $R_r$ is self-adjoint. 
Next, let  $\gamma= \prod_{r=1}^k \gamma_r$ on $
\mathcal{I}= \prod_{r = 1}^k \mathcal{D}_r$ to itself. Then
$\gamma$ is an involution on $\mathcal{I}$ to itself. Further,
for $0 \leq n \leq N$, $\gamma$ takes $\mathcal{I}_n$ to
 $\mathcal{I}_{N - n}$.  Let $R$ be the operator
$\bigotimes_{r = 1}^k R_r$ on $\mathcal{H}$ to itself. Then,
for $0 \leq n \leq N$, $R$
takes $\mathcal{H}^{(n)}$ onto  $\mathcal{H}^{(N - n)}$,
$u_n$ to $u_{N - n}$, $\mathcal{T}^{(n)}$ onto
$\mathcal{T}^{(N - n)}$, $\mathcal{S}^{(n)}$ onto
$\mathcal{S}^{(N-n)}$.
Therefore,  $R$ takes $\mathcal{S}$ onto itself.  Further, $R$
is  unitary and   self-adjoint. For $\mathbf{p}$, $\mathbf{q}
\in \mathcal{I},~ R (\ket{\mathbf{p}} \bra{ \mathbf{ q}}
) R= \ket{ \gamma( \mathbf{ p})} \bra{ \gamma(\mathbf{ q})}$.
Also, for $1 \leq j \neq j' \leq  k$, we may now
consider $R^\dag \rho R = R \rho R$ with $\rho$'s as
indicated  in Theorem \ref{th:3.2} and  part (i) above to add
to the class of positive operators with range in
$\mathcal{S}$ whose partial transpose at
level $j$ is not positive.

\item For $1 \leq j \leq k$ and $ 1 \leq j' \leq k$ with $j
\neq j'$ let $\mathcal{N}_{j,j'}$ be the set of NPT$_j$
states obtained
in Theorem \ref{th:3.2} together with those by methods
indicated in (i) and (ii) above. Put
\[\mathcal{N} =\bigcup_{\substack {1 \leq j \leq k \\ 1 \leq
j' \leq k \\ j \neq j'}} \mathcal{N}_{j,j'}.\]
Then each $\rho$ in $\mathcal{N}$ has range in the subspace
$\mathcal{S}$ and has a non-positive partial transpose at
some level. 

\item Johnston \cite{PhysRevA.87.064302} asked the following
question. 
\begin{quote}
What is the maximum dimension $\mu$ of a subspace
with the property that any state with range in the subspace
has at least one partial transpose which is non-positive.
\end{quote}
Let us call a subspace $\mathcal{E}$ of $\mathcal{H}$
satisfying this criteria an NPT space.

\item Let $\mathcal{E}$ be a subspace of $\mathcal{H}$. If $\{
\rho :~ \rho \text{  is a state with range in }
\mathcal{E}\}$ is contained in  $\mathcal{N}$, then $\mathcal{E}
\subset \mathcal{S}$ and $\mathcal{E}$ is NPT. In
particular, If $\mathcal{N} =\{\rho:~ \rho \text{ is a state
with range in } \mathcal{S}\}$, then $\mathcal{S}$ is NPT.
If that be so, then the answer to Johnston's question is 
\[\mu = M = d_1 d_2 \cdots d_k -(d_1 + d_2+ \cdots + d_k) +k -1.\]
This question still remains open, but the progress made in
this paper above does show that $\mathcal{N}$ is
substancially large.   
\end{enumerate}
\end{rem}

\section{Conclusion}
\label{conc}
Let $\mathcal{S}$ be a concrete completely entangled subspace of maximal
dimension, in $\mathcal{H} =\bigotimes_{j=1}^k \mathcal{H}_j$ with $2\leq d_j
=\dim \mathcal{H}_j < \infty$ for $1 \leq j \leq k$, constructed by
Parthasarathy~\cite{krp1}. Let $P_\mathcal{S}$ be the projection on this space.
We realized that the particular orthonormal basis $\mathcal{B}$ for
$\mathcal{S}$ for the bipartite case of equal dimensions obtained by
Parthasarathy~\cite{krp1} helps us to prove that $P_\mathcal{S}$ is not
positive under partial transpose. For any fixed $j$ and $j'$ with
$1\leq j \neq j' \leq k$,  we developed techniques to construct a
suitable orthonormal basis $\mathcal{C}$ for $\mathcal{S}$ for the multipartite
case utilizing $\mathcal{B}$ in the process. This enabled us to prove that
$P_\mathcal{S}$ is not positive under partial transpose at
level $j$. We next extended this to certain positive
operators $\rho$'s with range contained in
$\mathcal{S}$.  This generalizes a substantial  part of  the corresponding
result of Johnston~\cite{PhysRevA.87.064302} for the
bipartite case. Even after varying $j$ and $j'$ and clubbing
all $\rho$'s,  the
question whether there are any  states with support in
$\mathcal{S}$ that are PPT$_j$ for each $j$,  $ 1 \leq j
\leq k$, remains open.
However, in this paper we have made substantial progress in the direction
of obtaining an answer.
Further results on this issue will be presented elsewhere.
%%%%%%%%%%%%%%%%%%%%%%%%%%%%%%%%%%%%%%%%%%%%%%
\bibliographystyle{alpha}
\bibliography{biblio}
\end{document}